\begin{document}
\bibliographystyle{plain}
\newfont{\teneufm}{eufm10}
\newfont{\seveneufm}{eufm7}
\newfont{\fiveeufm}{eufm5}
%
%
\newfam\eufmfam
              \textfont\eufmfam=\teneufm \scriptfont\eufmfam=\seveneufm
              \scriptscriptfont\eufmfam=\fiveeufm
\def\bbbr{{\rm I\!R}}
\def\bbbm{{\rm I\!M}}
\def\bbbn{{\rm I\!N}}
\def\bbbf{{\rm I\!F}}
\def\bbbh{{\rm I\!H}}
\def\bbbk{{\rm I\!K}}
\def\bbbp{{\rm I\!P}}
\def\bbbone{{\mathchoice {\rm 1\mskip-4mu l} {\rm 1\mskip-4mu l}
{\rm 1\mskip-4.5mu l} {\rm 1\mskip-5mu l}}}
\def\bbbc{{\mathchoice {\setbox0=\hbox{$\displaystyle\rm C$}\hbox{\hbox
to0pt{\kern0.4\wd0\vrule height0.9\ht0\hss}\box0}}
{\setbox0=\hbox{$\textstyle\rm C$}\hbox{\hbox
to0pt{\kern0.4\wd0\vrule height0.9\ht0\hss}\box0}}
{\setbox0=\hbox{$\scriptstyle\rm C$}\hbox{\hbox
to0pt{\kern0.4\wd0\vrule height0.9\ht0\hss}\box0}}
{\setbox0=\hbox{$\scriptscriptstyle\rm C$}\hbox{\hbox
to0pt{\kern0.4\wd0\vrule height0.9\ht0\hss}\box0}}}}
\def\bbbq{{\mathchoice {\setbox0=\hbox{$\displaystyle\rm
Q$}\hbox{\raise
0.15\ht0\hbox to0pt{\kern0.4\wd0\vrule height0.8\ht0\hss}\box0}}
{\setbox0=\hbox{$\textstyle\rm Q$}\hbox{\raise
0.15\ht0\hbox to0pt{\kern0.4\wd0\vrule height0.8\ht0\hss}\box0}}
{\setbox0=\hbox{$\scriptstyle\rm Q$}\hbox{\raise
0.15\ht0\hbox to0pt{\kern0.4\wd0\vrule height0.7\ht0\hss}\box0}}
{\setbox0=\hbox{$\scriptscriptstyle\rm Q$}\hbox{\raise
0.15\ht0\hbox to0pt{\kern0.4\wd0\vrule height0.7\ht0\hss}\box0}}}}
\def\bbbt{{\mathchoice {\setbox0=\hbox{$\displaystyle\rm.
T$}\hbox{\hbox to0pt{\kern0.3\wd0\vrule height0.9\ht0\hss}\box0}}
{\setbox0=\hbox{$\textstyle\rm T$}\hbox{\hbox
to0pt{\kern0.3\wd0\vrule height0.9\ht0\hss}\box0}}
{\setbox0=\hbox{$\scriptstyle\rm T$}\hbox{\hbox
to0pt{\kern0.3\wd0\vrule height0.9\ht0\hss}\box0}}
{\setbox0=\hbox{$\scriptscriptstyle\rm T$}\hbox{\hbox
to0pt{\kern0.3\wd0\vrule height0.9\ht0\hss}\box0}}}}
\def\bbbs{{\mathchoice
{\setbox0=\hbox{$\displaystyle     \rm S$}\hbox{\raise0.5\ht0\hbox
to0pt{\kern0.35\wd0\vrule height0.45\ht0\hss}\hbox
to0pt{\kern0.55\wd0\vrule height0.5\ht0\hss}\box0}}
{\setbox0=\hbox{$\textstyle        \rm S$}\hbox{\raise0.5\ht0\hbox
to0pt{\kern0.35\wd0\vrule height0.45\ht0\hss}\hbox
to0pt{\kern0.55\wd0\vrule height0.5\ht0\hss}\box0}}
{\setbox0=\hbox{$\scriptstyle      \rm S$}\hbox{\raise0.5\ht0\hbox
to0pt{\kern0.35\wd0\vrule height0.45\ht0\hss}\raise0.05\ht0\hbox
to0pt{\kern0.5\wd0\vrule height0.45\ht0\hss}\box0}}
{\setbox0=\hbox{$\scriptscriptstyle\rm S$}\hbox{\raise0.5\ht0\hbox
to0pt{\kern0.4\wd0\vrule height0.45\ht0\hss}\raise0.05\ht0\hbox
to0pt{\kern0.55\wd0\vrule height0.45\ht0\hss}\box0}}}}
\def\bbbz{{\mathchoice {\hbox{$\sf\textstyle Z\kern-0.4em Z$}}
{\hbox{$\sf\textstyle Z\kern-0.4em Z$}}
{\hbox{$\sf\scriptstyle Z\kern-0.3em Z$}}
{\hbox{$\sf\scriptscriptstyle Z\kern-0.2em Z$}}}}
\def\ts{\thinspace}

\newtheorem{theorem}{Theorem}
\newtheorem{lemma}[theorem]{Lemma}
\newtheorem{claim}[theorem]{Claim}
\newtheorem{cor}[theorem]{Corollary}
\newtheorem{prop}[theorem]{Proposition}
\newtheorem{definition}[theorem]{Definition}
\newtheorem{remark}[theorem]{Remark}
\newtheorem{question}[theorem]{Open Question}
\newtheorem{example}[theorem]{Example}
\newtheorem{problem}[theorem]{Problem}

\def\qed{\ifmmode
\squareforqed\else{\unskip\nobreak\hfil
\penalty50\hskip1em\null\nobreak\hfil\squareforqed
\parfillskip=0pt\finalhyphendemerits=0\endgraf}\fi}

\def\squareforqed{\hbox{\rlap{$\sqcap$}$\sqcup$}}

\def \C {{\mathbb C}}
\def \F {{\mathbb F}}
\def \L {{\mathbb L}}
\def \K {{\mathbb K}}
\def \Q {{\mathbb Q}}
\def \Z {{\mathbb Z}}
\def\cA{{\mathcal A}}
\def\cB{{\mathcal B}}
\def\cC{{\mathcal C}}
\def\cD{{\mathcal D}}
\def\cE{{\mathcal E}}
\def\cF{{\mathcal F}}
\def\cG{{\mathcal G}}
\def\cH{{\mathcal H}}
\def\cI{{\mathcal I}}
\def\cJ{{\mathcal J}}
\def\cK{{\mathcal K}}
\def\cL{{\mathcal L}}
\def\cM{{\mathcal M}}
\def\cN{{\mathcal N}}
\def\cO{{\mathcal O}}
\def\cP{{\mathcal P}}
\def\cQ{{\mathcal Q}}
\def\cR{{\mathcal R}}
\def\cS{{\mathcal S}}
\def\cT{{\mathcal T}}
\def\cU{{\mathcal U}}
\def\cV{{\mathcal V}}
\def\cW{{\mathcal W}}
\def\cX{{\mathcal X}}
\def\cY{{\mathcal Y}}
\def\cZ{{\mathcal Z}}
\newcommand{\rmod}[1]{\: \mbox{mod}\: #1}

\def\tcN{\cN^\mathbf{c}}
\def\F{\mathbb F}
\def\Tr{\operatorname{Tr}}
\def\mand{\qquad \mbox{and} \qquad}
\renewcommand{\vec}[1]{\mathbf{#1}}
\def\eqref#1{(\ref{#1})}
\newcommand{\ignore}[1]{}
\hyphenation{re-pub-lished}
\parskip 1.5 mm
\def\lln{{\mathrm Lnln}}
\def\Res{\mathrm{Res}\,}
\def\F{{\bbbf}}
\def\Fp{\F_p}
\def\fp{\Fp^*}
\def\Fq{\F_q}
\def\ff{\F_2}
\def\ffn{\F_{2^n}}
\def\K{{\bbbk}}
\def \Z{{\bbbz}}
\def \N{{\bbbn}}
\def\Q{{\bbbq}}
\def \R{{\bbbr}}
\def \P{{\bbbp}}
\def\Zm{\Z_m}
\def \Um{{\mathcal U}_m}
\def \Bf{\frak B}
\def\Km{\cK_\mu}
\def\va {{\mathbf a}}
\def \vb {{\mathbf b}}
\def \vc {{\mathbf c}}
\def\vx{{\mathbf x}}
\def \vr {{\mathbf r}}
\def \vv {{\mathbf v}}
\def\vu{{\mathbf u}}
\def \vw{{\mathbf w}}
\def \vz {{\mathbfz}}
\def\\{\cr}
\def\({\left(}
\def\){\right)}
\def\fl#1{\left\lfloor#1\right\rfloor}
\def\rf#1{\left\lceil#1\right\rceil}
\def\flq#1{{\left\lfloor#1\right\rfloor}_q}
\def\flp#1{{\left\lfloor#1\right\rfloor}_p}
\def\flm#1{{\left\lfloor#1\right\rfloor}_m}
\def\Al{{\sl Alice}}
\def\Bob{{\sl Bob}}
\def\Or{{\mathcal O}}
\def\inv#1{\mbox{\rm{inv}}\,#1}
\def\invM#1{\mbox{\rm{inv}}_M\,#1}
\def\invp#1{\mbox{\rm{inv}}_p\,#1}
\def\Ln#1{\mbox{\rm{Ln}}\,#1}
\def \nd {\,|\hspace{-1.2mm}/\,}
\def\ord{\mu}
\def\E{\mathbf{E}}
\def\Cl{{\mathrm {Cl}}}
\def\epp{\mbox{\bf{e}}_{p-1}}
\def\ep{\mbox{\bf{e}}_p}
\def\eq{\mbox{\bf{e}}_q}
\def\bm{\bf{m}}
\newcommand{\floor}[1]{\lfloor {#1} \rfloor}
\newcommand{\comm}[1]{\marginpar{
\vskip-\baselineskip
\raggedright\footnotesize
\itshape\hrule\smallskip#1\par\smallskip\hrule}}
\def\rem{{\mathrm{\,rem\,}}}
\def\dist {{\mathrm{\,dist\,}}}
\def\etal{{\it et al.}}
\def\ie{{\it i.e. }}
\def\veps{{\varepsilon}}
\def\eps{{\eta}}
\def\ind#1{{\mathrm {ind}}\,#1}
               \def \MSB{{\mathrm{MSB}}}
\newcommand{\abs}[1]{\left| #1 \right|}

\title[Extracting over Jacobian of Hyperelliptic curves]{Extracting a uniform random bit-string over Jacobian of Hyperelliptic curves of Genus $2$}
%

\author{
{\sc Bernadette~Faye}\quad 
}

\address{
Universit\'e Cheikh Anta Diop de Dakar \newline
Departement de Mathematiques et d'Informatique\newline
BP: 5005, Dakar-Fann\newline
Dakar, Senegal.
}

\address{
African Institute for Mathematical Sciences(AIMS) \newline
Km 2 route de Joal \newline
BP 1418 , Mbour, Senegal and \newline 
School of Mathematics\newline 
University of the Witwatersrand \newline
Private Bag X3, Wits 2050, South Africa.
}

\email{bernadette@aims-senegal.org}
\pagenumbering{arabic}

\begin{abstract} Here, we proposed an improved version of the deterministic random extractors $SEJ$ and $PEJ$ proposed by R. R. Farashahi in \cite{F} in 2009.  By using the Mumford's representation of a reduced divisor $D$ of the Jacobian $J(\mathbb{F}_q)$ of a hyperelliptic curve $\mathcal{H}$ of genus $2$ with odd characteristic, we extract  a perfectly random bit string of the sum of abscissas of rational points on $\mathcal{H}$ in the support of $D$. By this new approach, we reduce in an elementary way the upper bound of the statistical distance of the deterministic randomness extractors defined over $\mathbb{F}_q$ where $q=p^n$, for some positive integer $n\geq 1$ and $p$ an odd prime. 
\end{abstract}


\maketitle

\section{Introduction}
The problem of converting random points of a variety (e.g a curve or Jacobian of a curve) into random bits has several cryptographic applications. Such applications are key derivation functions, key exchange protocols and design of cryptographically secure pseudorandom number generators. However the binary representation of the common secret element is {\it distinguishable} from a uniformly random bit-string of the same length. Hence one has to convert this group element into a random-looking bit-string. This can be done using a deterministic extractor.

Randomness extractors are having more and more applications in computer sciences, both in theory and in applications. Randomness extractors are objects that turn {\it weak} randomness into almost {\it ideal} randomness. For example, they can be used in designing key exchange protocols and are secure pseudorandom generators in the standard model. 

Nowaday, it's a routine matter to extract randomness from a single source using arithmetic of finite fields or Elliptic curves. However, some subexponential attacks again the discret logarithm problem on some elliptic curves are known. A recommendation is to move on Jacobian varieties( i.e hyperelliptic curves) of genus less or equals to $3$. In $1989$, Koblitz N. \cite{Ko} proposed a cryptosystem based on hyperelliptic curves. Since then, hyperelliptic curves have gain lot of interest for cryptographic  applications. Furthermore, they were shown to be competitive with elliptic curves in speed and security. In \cite{GP}, the security of genus $2$ hyperelliptic curves is assumed to be similar to that of elliptic curves of the same group size. 

At this moment, several deterministic extractors for elliptic curves are known. We refer to  \cite{Ciss1},\cite{Che},\cite{F2},\cite{F3} and the references therein. In our knowledge, few work have been done on randomness extractors of Jacobian of hyperelliptic curves.  In general terms the problem can be described as follows. Given an algebraic variety $\mathcal{V}$ over $\mathbb{F}_q$ and one or several sources of random but not necessarily uniformly generated points on $\mathcal{V}$, design an algorithm to generate long strings of random bits with a distribution that
is close to uniform. In \cite{Dvir}, Dvir has considered the problem of constructing randomness extractors for algebraic varieties. His construction requires only one but rather
uniform source of points on $\mathcal{V}$. In fact, the task of extraction from an algebraic variety generalize the problem of extraction from affine sources which has drawn a considerable attention for cryptographic applications.

In this paper, we proposed an improved version of the extractors $SEJ$ and $PEJ$ from \cite{F} for $J(\mathbb{F}_q)$, where $q=p^n$ for some positive integer $n$, the Jacobian of a genus $2$ hyperelliptic curve $\mathcal{H}$ defined over $\mathbb{F}_q.$ In fact, for a given reduced divisor $D$ of $J(\mathbb{F}_q)$ we used the Mumford's representation of $D$ and extract a perfectly random bit string of the coordinate of its undeterminate corresponding to the sums of the abcissas of the rational points on $\mathcal{H}$ in the support of $D$. The element extracted from $D$ chosen randomly in $J(\mathbb{F}_q)$ is statistically close to uniform in $\mathbb{F}_q$. Instead of computing directly the statistical distance between the value of the element extracted form $D$ and a random variable in $\mathbb{F}_q$ as done by Farashahi in \cite{F}, we compute first the collision probability by summing over polynomials of degree less or equal to $2$ in $\mathbb{K}[X]$, where $\mathbb{K}$ is any subfield of $\overline{\mathbb{F}_q}.$

The remainder of this paper is organized as follows. In Section \ref{sec1}, we recall some definitions and results on the measurement parameters of randomness and bounds on character sums with polynomial arguments. In sections \ref{sec:3:1} and \ref{sec:3}, we present and analyze the security of the modified version of the randomness extractors defined over $\mathbb{F}_{p^n}$ and $\mathbb{F}_p$, respectively. We show that the outputs of these extractors, for a given uniformly random point of $\mathbb{F}_q$, are statistically close to a uniformly random variable in $\mathbb{F}_q$. For the analysis of these extractors, we need some bounds on the cardinalities of the character sums over polynomial defined over $\mathbb{K}[X].$ We give our estimates for them using Mordell's bound for polynomial of degree $\leq 2.$ 
\medskip
\medskip

\section{Preliminary results}
\label{sec1}
In this section, we recall basics definitions and notations that will be used throughout the paper.

\subsection*{Notation}: For a finite field $\mathbb{F}$, we note by $\overline{\mathbb{F}}$ the algebraic closure of the field $\mathbb{F}$. Along this paper, $\mathbb{F}_q$ is a finite field with $q$ elements where $q=p^n$, with $p$ an odd prime and $n\in\mathbb{N}^{*}.$ Let $E$ be a curve defined over $\mathbb{F}_q$ , then the set of $\mathbb{F}_q$-rational points on $E$ is denoted by $E(\mathbb{F}_q).$  Let $\mathbb{K}$ be any subfied of $\overline{\mathbb{F}_q}$ and $\mathcal{H}$ be an imaginary hyperelliptic curve, then we denote by $J_{\mathcal{H}}(\mathbb{K})$, the Jacobian of $\mathcal{H}$ over $\mathbb{K}$.  We denote by $\mathbb{K}[X]_{\leq d}$ the sets of polynomial in $\mathbb{K}[X]$ of degree less or equal  to $d.$ Further, let $lsb_k(x)$ be the $k$-least significant bits of a random element in $\mathbb{F}_q.$
\medskip

\newpage

\subsection{Hyperelliptic Curves}

\begin{definition}{Jacobian of Hyperelliptic Curves}\newline
Let $\mathcal{H}$ be an Hyperelliptic curve of $g$ in $\mathbb{F}_q$, where $q$ is odd. Here, we consider $\mathcal{H}$ to be an {\it imaginary} hyperelliptic curve.  Then $\mathcal{H}$ has a plane model of the form $y^2 = f(x)$, where $f$ is a square-free polynomial and $deg(f) = 2g+1.$ For any subfield $\mathbb{K}$ of $\overline{\mathbb{F}_q}$ containing $\mathbb{F}_q$ , the set
$$\mathcal{H}(\mathbb{K}) = \{(x, y) : x, y \in \mathbb{K}, y^2 = f(x)\}\cup \{P_\infty\},$$
is called the set of $\mathbb{K}$-rational points on $\mathcal{H}$. The point $P_\infty$ is called the point at infinity for $\mathbb{H}$. A point $P$ on H, also written $P \in \mathcal{H}$, is a point $P \in  \mathcal{H}(\mathbb{F}_q)$. The
negative of a point $P = (x, y)$ on $\mathcal{H}$ is defined as $-P = (x, -y)$ and $-P_\infty = P_\infty.$
\end{definition}
\medskip

\begin{definition}{Reduced divisors}

For each nontrivial class of divisors in $J_{\mathcal{H}}(\mathbb{K})$, there exist a unique divisor $D$ on $\mathcal{H}$ over $\mathbb{K}$ of the form
$$D=\sum_{i=1}^{g}P_i-rP_\infty$$
where $P_i = (x_i , y_i )\neq P_\infty$ , $P_i \neq -P_j$ , for $i\neq j$, and $r\leq g$. Such a divisor is called a reduced divisor on $\mathcal{H}$ over $\mathbb{K}$. By using Mumford's representation \cite{Mum}, each reduced divisor $D$ on $\mathcal{H}$ over $\mathbb{K}$ can be uniquely represented by a pair of polynomials $[u(x), v(x)], u, v \in \mathbb{K}[x],$ where $u$ is monic, $deg(v) < deg(u)\leq  g,$  and $u \mid (v^2-f)$. Precisely $u(x)= \prod_{i=1}^{r} (x -x_i )$ and $v(x_i ) = y_i$ . The neutral element of $J_{\mathcal{H}}(\mathbb{K})$, denoted by $\mathcal{O}$, is represented by $[1, 0].$ Cantor’s algorithm,\cite{Ca}, efficiently computes the sum of two reduced divisors in $J_{\mathcal{H}}(\mathbb{K})$ and expresses it in reduced form.
\end{definition}

\subsection{Measure of Randomness}

 \begin{definition}{Collision Probability}\newline
Let $\mathcal{X}$ be a finite set and $X$ an $\mathcal{X}$-valued random variable. The collision probability  of $X$  denoted by $Col(X)$, is the probability $Col(X)=Pr[X=X^{'}]=\sum_{x\in\mathcal{X}}Pr[X=x]^2.$
\end{definition}

\medskip
\begin{definition}{Statistical Distance}\newline
Let $\mathcal{X}$ be a finite set and $X$. If $X$ and $Y$ are $\mathcal{X}$-valued random variables. Then the statistical Distance between $X$ and $Y$ is define as 
$$SD(X,Y)=\frac{1}{2}\sum_{x\in\mathcal{X}}|Pr[X=x]-Pr[Y=x]|.$$

\end{definition}

Let $U_{\mathcal{X}}$ be a random variable uniformely distributed on $\mathcal{X}$  and $\delta\leq 1$ be a positive real number. Then a random variable $X$ on $\mathcal{X}$ is said to be $\delta$-uniform if $SD(X,U_{\mathcal{X}})\leq \delta.$ 

\begin{lemma}{Relation between SD and Col(X)}\newline
Let $X$ be a random variable over a finite set $\mathcal{X}$ of size $|\mathcal{X}|$ and $\Delta=SD(X,U_{\mathcal{X}})$ the statistical distance between $X$ and $U_{\mathcal{X}}$ ,  $U_{\mathcal{X}}$ is be a random variable uniformely distributed on $\mathcal{X}$. Then

\begin{equation}
\label{eq:relation1}
Col(X)\geq \frac{1+4\Delta}{|\mathcal{X}|}.
\end{equation}

\end{lemma}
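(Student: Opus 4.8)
The plan is to express both quantities through the deviation of $X$ from uniformity and then control one by the other. Write $N=|\mathcal{X}|$ and set $e_x=\Pr[X=x]-1/N$ for each $x\in\mathcal{X}$; since $\Pr[X=\cdot]$ and $U_{\mathcal{X}}$ are both probability distributions we have $\sum_{x}e_x=0$. First I would expand the collision probability about the uniform point,
\[
Col(X)=\sum_{x}\Bigl(\tfrac1N+e_x\Bigr)^2=\frac1N+\frac2N\sum_{x}e_x+\sum_{x}e_x^2=\frac1N+\sum_{x}e_x^2 ,
\]
using $\sum_x e_x=0$. On the other side, by definition $\Delta=\tfrac12\sum_{x}|e_x|$, so $\sum_{x}|e_x|=2\Delta$. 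Thus the lemma is equivalent to the energy estimate $\sum_{x}e_x^2\ge 4\Delta/N$, and the whole task reduces to bounding the $\ell^2$-energy of the deviation vector from below by a \emph{linear} function of its $\ell^1$-norm $2\Delta$.

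The route I would take to the stated factor $4\Delta$ is a termwise comparison rather than a single global inequality. If every nonzero deviation is bounded below by $2/N$, i.e. $|e_x|\ge 2/N$ whenever $e_x\neq 0$, then $e_x^2=|e_x|\cdot|e_x|\ge \tfrac{2}{N}|e_x|$ holds term by term (and trivially when $e_x=0$), and summing gives
\[
\sum_{x}e_x^2\ge \frac2N\sum_{x}|e_x|=\frac{4\Delta}{N},
\]
whence $Col(X)\ge (1+4\Delta)/N$ exactly as claimed. I would also record that the purely generic tool here, Cauchy--Schwarz of $(|e_x|)_x$ against the all-ones vector, only yields $(2\Delta)^2\le N\sum_x e_x^2$, that is the weaker $\sum_x e_x^2\ge 4\Delta^2/N$; so the linear factor $4\Delta$ does not come from the unconditional $\ell^1$--$\ell^2$ comparison and genuinely requires the extra lower bound on the individual deviations.

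Consequently I expect the main obstacle to be establishing, for the specific variable $X$ arising from the Mumford coordinate of a randomly chosen reduced divisor, that the deviations $\bigl|\Pr[X=x]-1/N\bigr|$ are indeed bounded below on their support (or, more flexibly, that the output distribution takes only a controlled finite set of probability values, so that a termwise estimate of the above kind becomes available). This quantitative input is precisely what separates the claimed $4\Delta$ from the textbook $4\Delta^2$, and verifying it for the divisor-derived distribution --- rather than the elementary expansion and summation above --- is where the real work lies.
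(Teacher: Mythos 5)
Your expansion $Col(X)=\frac{1}{N}+\sum_x e_x^2$ with $e_x=\Pr[X=x]-\frac{1}{N}$ and $\sum_x|e_x|=2\Delta$ is correct, and your Cauchy--Schwarz aside already proves $Col(X)\ge (1+4\Delta^2)/N$. But the step you defer as the ``real work'' is a genuine gap, and it cannot be closed: the hypothesis $|e_x|\ge 2/N$ on the support of the deviation is not part of the lemma's hypotheses (the lemma quantifies over \emph{all} random variables $X$ on $\mathcal{X}$), so even verifying it for the divisor-derived distribution would not prove the statement --- and in fact the statement as printed is false. Take $\Pr[X=x_1]=\frac{1}{N}+\epsilon$, $\Pr[X=x_2]=\frac{1}{N}-\epsilon$, and $\Pr[X=x]=\frac{1}{N}$ otherwise; then $\Delta=\epsilon$ and $Col(X)=\frac{1}{N}+2\epsilon^2$, so $Col(X)\ge(1+4\Delta)/N$ would force $\epsilon\ge 2/N$ --- precisely your threshold --- which is impossible since admissibility requires $\epsilon\le 1/N$. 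So every such perturbation is a counterexample. Worse, in the intended application the extractor output is statistically close to uniform, so the individual deviations sit far \emph{below} $2/N$: the conditional route you propose is unusable in exactly the regime the paper cares about.

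For comparison with the paper: the paper states this lemma without any proof, and the linear $4\Delta$ is evidently a typo for the classical $4\Delta^2$. This is visible in the proof of Theorem \ref{th:1}, where the inequality is invoked in the form $\frac{1+4\Delta^2(A,U_{\mathbb{F}_q})}{p^k}\le Col(A)$, and the final bound $\Delta(A,U_{\mathbb{F}_q})=O\bigl(\sqrt{p^k}/(2\sqrt{q}(q+1))\bigr)$ is consistent only with the quadratic version. So the correct resolution is not to seek extra structural input on the distribution, but to correct the exponent: your two-line argument (expansion plus Cauchy--Schwarz against the all-ones vector) is a complete proof of the corrected lemma $Col(X)\ge (1+4\Delta^2)/|\mathcal{X}|$, equivalently $\Delta\le\frac{1}{2}\sqrt{|\mathcal{X}|\,Col(X)-1}$, and that version suffices for every use made of \eqref{eq:relation1} in the paper. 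You correctly diagnosed that the linear bound does not follow from any unconditional $\ell^1$--$\ell^2$ comparison; the right conclusion to draw from that observation is that the stated inequality is wrong, not that a distribution-specific lower bound on the deviations must be supplied.
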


\begin{definition}{Deterministic $(\mathcal{Y},\delta)$-extractor.}\newline
Let $\mathcal{X}$ and $\mathcal{Y}$ be two sets. Let $Ext$ be a function $Ext:\mathcal{X}\leftarrow \mathcal{Y}$. We say that $Ext$ is a deterministic $(\mathcal{Y},\delta)$-extractor of $\mathcal{X})$-extractor if $Ext(U_{\mathcal{Y}})$ is $\delta$-uniform on $(\mathcal{Y}$. That is, 

$$SD(Ext(U_{\mathcal{X}}),U_{\mathcal{X}})\leq \delta.$$
\end{definition}

\subsection{Character Sums with Polynomial arguments}

\begin{definition}{Character}\newline
Let $G$ be an abelian group. A character of $G$ is a homomorphisme from $G\rightarrow \mathbb{C}^{*}$. A character is trivial if it is identically $1$. We denote the trivial character by $\psi_0.$
\end{definition}

\begin{definition}
Let $\mathbb{F}_q$ be a given finite field. An additive character $\psi: \mathbb{F}_q^{+}\rightarrow \mathbb{C}$ is
a character $\psi$ with $\mathbb{F}_q$ considered as an additive group. A multiplicative character $\psi:
F^{∗}_q\rightarrow \mathbb{C}$ is a character with $\mathbb{F}_q^{∗}= \mathbb{F}_q\backslash\{0\}$ considered as a multiplicative group. We extend $\psi$ to $\mathbb{F}_q$ by defining $\psi(0) = 1$ if $\psi$ is trivial, and $\psi(0) = 0$ otherwise. Note that the extended $\psi$ still preserves multiplication.

\end{definition}

The main interests of exponential sums is that they allow to construct some charac-
teristic functions and in some cases we know good bounds for them. The use of these
characteristic functions can permit to evaluate the size of these sets. We focus on certain
character sums, those involving the character $e_p$ defined as follows.

\begin{theorem}{Multiplicative Characters of $\mathbb{F}_p$}\newline
The multiplicative characters of $\mathbb{F}_p$, where $p$ is a prime, are given by: $\forall x\in \mathbb{F}_p, e_p(x)=e^{\frac{2i\pi x}{p}}\in\mathbb{C}^{*}.$

\end{theorem}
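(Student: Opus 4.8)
The plan is to verify directly that the prescription $x \mapsto e_p(x) = e^{2\pi i x/p}$ defines a character of the additive group of $\mathbb{F}_p$, and then to show that every such character arises this way, so that up to the choice of a primitive $p$-th root of unity the formula exhausts the character group. I would begin by identifying $\mathbb{F}_p$ with $\mathbb{Z}/p\mathbb{Z}$ and checking well-definedness: if $x \equiv x' \pmod{p}$, then $x - x' = pk$ for some integer $k$, whence $e_p(x)/e_p(x') = e^{2\pi i k} = 1$. Thus $e_p$ is a genuine function on $\mathbb{F}_p$, independent of the chosen integer representative.

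Next I would establish the homomorphism property. For $x, y \in \mathbb{F}_p$ the addition law of the complex exponential gives $e_p(x+y) = e^{2\pi i (x+y)/p} = e^{2\pi i x/p}\,e^{2\pi i y/p} = e_p(x)\,e_p(y)$, and $e_p(x) \neq 0$ for every $x$. Hence $e_p \colon \mathbb{F}_p^{+} \to \mathbb{C}^{*}$ is a homomorphism, i.e. a character in the sense of the earlier definition, and it is nontrivial since $e_p(1) = e^{2\pi i/p} \neq 1$.

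To see that this describes \emph{all} the characters, I would use that $\mathbb{F}_p^{+}$ is cyclic of order $p$, generated by $1$. Any character $\psi$ is completely determined by the single value $\psi(1)$, and since $p\cdot 1 = 0$ in $\mathbb{F}_p$ we have $\psi(1)^p = \psi(p\cdot 1) = \psi(0) = 1$; hence $\psi(1)$ is a $p$-th root of unity, say $\psi(1) = e^{2\pi i a/p}$ with $0 \le a \le p-1$. Conversely each such choice yields a character $\psi_a(x) = e^{2\pi i a x/p} = e_p(ax)$, and these $p$ maps are pairwise distinct; since a finite abelian group has exactly as many characters as elements, they constitute the entire character group, with $e_p = \psi_1$ serving as the canonical generator.

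There is no hard analytic step here: the work is entirely the well-definedness check and the cyclicity-plus-counting argument. The one point I would flag is a labeling subtlety, namely that the formula $e^{2\pi i x/p}$ defines an \emph{additive} character of $\mathbb{F}_p$ (a homomorphism out of $(\mathbb{F}_p,+)$), rather than a multiplicative one on $\mathbb{F}_p^{*}$; the genuine content to be proved is therefore the classification of additive characters given above, and it is this object $e_p$ that will feed into the character-sum estimates used later in the paper.
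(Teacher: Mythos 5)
Your proof is correct, and there is in fact nothing in the paper to compare it against: the paper asserts this statement as a bare theorem with no proof attached, so your argument fills a genuine omission. Your route — checking well-definedness on residue classes mod $p$, deriving the homomorphism property from the addition law of the exponential, and then classifying all characters of the cyclic group $(\mathbb{F}_p,+)$ by the value at the generator $1$, which must be a $p$-th root of unity, together with the count of $p$ such characters — is the standard and complete argument. You are also right, and it is worth stressing, that the theorem as literally stated is false: $e_p(x)=e^{2\pi i x/p}$ is a homomorphism out of the \emph{additive} group $(\mathbb{F}_p,+)$, not out of $\mathbb{F}_p^{*}$, so it is an additive character; the genuine multiplicative characters of $\mathbb{F}_p^{*}$ are the $p-1$ maps $\chi_j(g^x)=e^{2\pi i jx/(p-1)}$ for a fixed primitive root $g$, and they cannot all be of the form $e_p$. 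Your corrected reading is the one consistent with the paper's actual use of $e_p$, which appears only inside additive character sums (the orthogonality relations, the Winterhof bound, and the Mordell bound), and indeed the paper's subsequent theorem on additive characters of $\mathbb{F}_q$ builds $\psi(x)=e_p(\Tr(x))$ precisely from this additive $e_p$.
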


\begin{theorem}{Additive Characters of $\mathbb{F}_q$}\newline
Suppose that $q=p^n$, where $p$ is a prime and $n\geq 1$. The additive characters of $\mathbb{F}_q$ are given by: $\psi(x)=e_p(Tr(x))$ where $Tr(x)=x+x^p+\cdots+x^{p^{n-1}}$ is the trace of $x$.
\end{theorem}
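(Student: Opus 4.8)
The plan is to proceed in two stages: first to verify that the proposed function $\psi(x)=e_p(\operatorname{Tr}(x))$ is genuinely an additive character of $\mathbb{F}_q$, and then to argue that, together with its twists $\psi_a(x)=e_p(\operatorname{Tr}(ax))$ for $a\in\mathbb{F}_q$, these functions exhaust the group of additive characters.

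First I would check that the trace $\operatorname{Tr}(x)=x+x^p+\cdots+x^{p^{n-1}}$ really takes values in $\mathbb{F}_p$, so that $e_p$ may be applied to it. Raising to the $p$-th power and using that the Frobenius $x\mapsto x^p$ is additive in characteristic $p$, the summands of $\operatorname{Tr}(x)$ are permuted cyclically, and since $x^{p^n}=x$ for every $x\in\mathbb{F}_q$ one obtains $\operatorname{Tr}(x)^p=\operatorname{Tr}(x)$; hence $\operatorname{Tr}(x)$ is a root of $T^p-T$ and lies in $\mathbb{F}_p$. Additivity of the trace follows from the same freshman's dream $(x+y)^{p^i}=x^{p^i}+y^{p^i}$, giving $\operatorname{Tr}(x+y)=\operatorname{Tr}(x)+\operatorname{Tr}(y)$. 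Because $e_p$ is by definition a character of the additive group $\mathbb{F}_p$, the composite $\psi=e_p\circ\operatorname{Tr}$ satisfies $\psi(x+y)=\psi(x)\psi(y)$, so $\psi$ is an additive character; applying this to $x\mapsto ax$ shows each twist $\psi_a$ is one as well.

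The substantive step is to show there are no others. The dual group of the finite abelian group $(\mathbb{F}_q,+)$ has exactly $q$ elements, so it suffices to exhibit $q$ \emph{distinct} characters of the asserted form. I would use the family $\{\psi_a\}_{a\in\mathbb{F}_q}$ and show it is pairwise distinct: if $\psi_a=\psi_b$ then $\psi_{a-b}$ is trivial, i.e. $\operatorname{Tr}((a-b)x)=0$ for all $x\in\mathbb{F}_q$. The main obstacle is precisely this non-degeneracy claim, namely ruling out $a-b\neq 0$. Writing $c=a-b$, the function $x\mapsto\operatorname{Tr}(cx)=\sum_{i=0}^{n-1}c^{p^i}x^{p^i}$ is a polynomial of degree at most $p^{n-1}<q$ whose leading coefficient $c^{p^{n-1}}$ is nonzero when $c\neq 0$; such a nonzero polynomial cannot vanish at all $q$ points of $\mathbb{F}_q$, a contradiction. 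Hence $a=b$, the $q$ characters $\psi_a$ are distinct, and they constitute the entire dual group. This identifies the additive characters of $\mathbb{F}_q$ with the maps $e_p(\operatorname{Tr}(a\,\cdot))$, the function $\psi(x)=e_p(\operatorname{Tr}(x))$ being the canonical one recovered at $a=1$.
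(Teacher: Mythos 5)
Your proof is correct. Note, however, that the paper itself offers no proof of this statement: it is quoted as classical background (like the neighbouring facts credited to Zimmer and Winterhof), so there is no argument in the paper to compare yours against. On its own merits, your write-up is the complete standard argument and is sound at every step: the Frobenius computation $\operatorname{Tr}(x)^p=\operatorname{Tr}(x)$ correctly places the trace in $\mathbb{F}_p$, additivity of $\operatorname{Tr}$ gives the homomorphism property of $\psi=e_p\circ\operatorname{Tr}$, and you rightly recognise that the real content is exhaustiveness, which you settle by counting: the dual of $(\mathbb{F}_q,+)$ has exactly $q$ elements, and the family $\psi_a(x)=e_p(\operatorname{Tr}(ax))$ is pairwise distinct because $\operatorname{Tr}(cx)=\sum_{i=0}^{n-1}c^{p^i}x^{p^i}$ is, for $c\neq 0$, a nonzero polynomial of degree $p^{n-1}<q$ and so cannot vanish on all of $\mathbb{F}_q$. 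This non-degeneracy step is exactly where a naive attempt would stall, and your linearized-polynomial root count handles it cleanly (an equivalent alternative is to invoke surjectivity of the trace map onto $\mathbb{F}_p$). Two small remarks: you quietly use the standard fact that a finite abelian group has as many characters as elements, which is fair to cite but worth flagging as an input; and your reading of the theorem as describing the full family $\{\psi_a\}_{a\in\mathbb{F}_q}$, with $\psi$ the canonical member at $a=1$, is the correct interpretation of the paper's somewhat loosely worded statement. Incidentally, your proof also implicitly corrects a slip in the paper, which mislabels $e_p(x)=e^{2i\pi x/p}$ as a \emph{multiplicative} character of $\mathbb{F}_p$ when it is of course the canonical additive one.
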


\begin{lemma}
\label{lem:size}
	Let $p$ be a prime number and $G$ a multiplicative subgroup of $\mathbb{F}_p^{*}$.
	\begin{enumerate}
	\item If $a=0$, $\sum_{x=0}^{p-1}e_p(ax)=p.$
	\item For all $a\in F_p^{*}, \sum_{x=0}^{p-1}e_p(ax)=0.$
	\end{enumerate}
\end{lemma}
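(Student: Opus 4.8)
The plan is to treat the two cases separately, relying on nothing more than the geometric series formula. Part (1) is immediate: when $a=0$ we have $e_p(ax)=e_p(0)=e^{0}=1$ for every $x$, so the sum collapses to $\sum_{x=0}^{p-1}1=p$.

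For part (2) I would write the summand as a power of a single fixed root of unity. Setting $\zeta=e_p(a)=e^{2i\pi a/p}$, the definition of $e_p$ gives $e_p(ax)=e^{2i\pi ax/p}=\zeta^{x}$, so that
\[
\sum_{x=0}^{p-1}e_p(ax)=\sum_{x=0}^{p-1}\zeta^{x}.
\]
This is a finite geometric series in $\zeta$. The two facts I need are that $\zeta^{p}=e^{2i\pi a}=1$ and that $\zeta\neq 1$; the latter holds precisely because $a\in\mathbb{F}_p^{*}$ forces $a\not\equiv 0\pmod p$ with $0<a<p$, so $a/p$ is not an integer. Granting $\zeta\neq 1$, the geometric series formula yields $\sum_{x=0}^{p-1}\zeta^{x}=(\zeta^{p}-1)/(\zeta-1)=(1-1)/(\zeta-1)=0$, which is the claim.

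The only point deserving care --- and the closest thing here to a genuine obstacle --- is the justification that $\zeta\neq 1$, i.e.\ that $e_p(a)$ is a nontrivial $p$th root of unity whenever $a\not\equiv 0\pmod p$. This is exactly where the primality of $p$ (more precisely, the coprimality of $a$ and $p$) enters, and it is what keeps the denominator $\zeta-1$ nonzero so that the geometric series formula may legitimately be applied. I would also remark that the hypothesis involving the subgroup $G\subseteq\mathbb{F}_p^{*}$ plays no role in the statement as written and may be dropped; the identity is simply the standard orthogonality relation for the character $e_p$.
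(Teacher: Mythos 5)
Your proof is correct. Note that the paper itself gives no argument for this lemma at all --- its ``proof'' is simply the citation ``See \cite{Zim} pp.\ 69'' --- so your geometric-series computation supplies the standard self-contained proof that the paper delegates to a reference. Both parts are handled properly: the case $a=0$ is the trivial count, and for $a\in\mathbb{F}_p^{*}$ you correctly isolate the one point needing care, namely that $\zeta=e^{2i\pi a/p}\neq 1$, which is what legitimizes dividing by $\zeta-1$ in $\sum_{x=0}^{p-1}\zeta^{x}=(\zeta^{p}-1)/(\zeta-1)=0$. One small refinement: what you actually use is only that $p\nmid a$ (so $a/p$ is not an integer), not primality per se; the identity holds verbatim for any modulus $m$ in place of $p$ whenever $m\nmid a$, so the orthogonality relation is not special to prime fields. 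Your closing remark is also apt: the hypothesis about the multiplicative subgroup $G\subseteq\mathbb{F}_p^{*}$ is never used in the statement and is a vestige in the paper's formulation.
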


\begin{proof}
See  \cite{Zim} pp $69.$
\end{proof}

\begin{theorem}{Winterhof Bound}\newline
\label{th:winterhof}
 Let $V$ be an additive subgroup of $\mathbb{F}_{p^n}$ and $\psi$ and additive character of $\mathbb{F}_{p^n}.$ Then
 
 $$\sum_{a\in\mathbb{F}_{p^n}}\Big|\sum_{x\in V}\psi(ax)\Big|\leq p^n.$$

\end{theorem}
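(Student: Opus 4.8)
The plan is to evaluate the left-hand side exactly rather than merely bound it, using the orthogonality of additive characters. First I would record the structural fact that, since $\mathbb{F}_{p^n}$ is an $\mathbb{F}_p$-vector space and multiplication by an element of the prime field is just repeated addition, every additive subgroup $V$ is in fact an $\mathbb{F}_p$-subspace; hence $|V| = p^k$ for some integer $0 \le k \le n$. I would also assume $\psi$ is nontrivial (as is the canonical character $\psi(x)=e_p(\Tr(x))$ of the preceding theorem), the stated bound being specific to that case, since for the trivial character the inner sum is identically $|V|$ and the left-hand side equals $p^n|V|$.

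The core observation is that for each fixed $a \in \mathbb{F}_{p^n}$ the map $x \mapsto \psi(ax)$ is an additive character of the group $V$. By the orthogonality relations (the analogue for $V$ of Lemma \ref{lem:size}), the inner sum $\sum_{x\in V}\psi(ax)$ equals $|V|$ when this character is trivial on $V$ and vanishes otherwise. Consequently every nonzero term of the outer sum contributes exactly $|V|$, and the whole quantity equals $|V|$ times the number of $a$ for which $\psi(ax)=1$ holds for all $x \in V$. Writing $A=\{a\in\mathbb{F}_{p^n}: \psi(ax)=1 \ \forall x\in V\}$, I therefore reduce the theorem to the identity $|A|\cdot|V| = p^n$.

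To compute $|A|$ I would pass to the trace description of additive characters provided by the preceding theorem: write $\psi(y)=e_p(\Tr(by))$ for some $b\neq 0$. Since multiplication by $b$ permutes $\mathbb{F}_{p^n}$, substituting $c=ba$ shows that $A$ is the image under $a\mapsto b^{-1}a$ of the orthogonal complement $V^{\perp}=\{c: \Tr(cx)=0 \ \forall x\in V\}$ taken with respect to the trace pairing. The trace form $(c,x)\mapsto \Tr(cx)$ is a nondegenerate $\mathbb{F}_p$-bilinear form on $\mathbb{F}_{p^n}$, so $\dim_{\mathbb{F}_p} V^{\perp} = n-k$ and hence $|A|=|V^{\perp}|=p^{n-k}$. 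Combining, $|A|\cdot|V| = p^{n-k}\cdot p^k = p^n$, which gives the stated bound, indeed with equality.

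I expect the only point needing care to be the nondegeneracy of the trace form, equivalently the fact that the pairing $a\mapsto \psi(a\,\cdot\,)$ identifies $\mathbb{F}_{p^n}$ with the character group of its additive group, so that annihilators of subgroups have complementary size. Once that standard fact is in hand, the dimension count $|V^{\perp}|=p^n/|V|$ and the evaluation of the sum are immediate. If one prefers to avoid the trace form altogether, the same conclusion follows purely from Pontryagin duality for the finite abelian group $(\mathbb{F}_{p^n},+)$: the annihilator of a subgroup $V$ has order $|\mathbb{F}_{p^n}|/|V|$, again yielding $|A|\cdot|V|=p^n$.
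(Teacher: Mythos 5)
Your proof is correct, and it is genuinely different from what the paper offers: the paper gives no argument at all for this theorem, deferring entirely to the citation \cite{Win}, whereas you supply a complete, self-contained proof. Your route --- observing that $V$ is automatically an $\mathbb{F}_p$-subspace, applying orthogonality of characters on $V$ so that each inner sum is $|V|$ or $0$, and then counting the annihilator $A$ via the nondegenerate trace pairing (or, as you note, directly via Pontryagin duality, $|A| = p^n/|V|$) --- is the standard and essentially optimal argument, and it buys two things the paper's bare citation does not. First, you obtain the exact value: the sum \emph{equals} $p^n$ for every nontrivial $\psi$ and every subgroup $V$, so the stated inequality is sharp. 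Second, you correctly identify that the theorem as printed is false for the trivial character whenever $V \neq \{0\}$, since then the left-hand side is $p^n|V|$; the hypothesis that $\psi$ is nontrivial is implicit in the paper (and in how the bound is used in the proof of Theorem \ref{th:1}, where it is applied only to $\psi \neq \psi_0$) but is never stated, and your proposal makes this necessary assumption explicit. The only ingredient you invoke beyond the paper's toolkit is the standard fact that every additive character of $\mathbb{F}_{p^n}$ has the form $y \mapsto e_p(\Tr(by))$, which extends the paper's statement of the canonical character and follows from the same nondegeneracy of the trace form you already require, so no circularity or gap arises.
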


\begin{proof}
See \cite{Win}
\end{proof}
\subsection{Elementary Bounds on character sums with polynomial arguments}.
\medskip

Here we use the same presentation as in \cite{sums}. Let $P(X)\in\mathbb{F}_q[X]$ be a polynomial of degree at most $d$. It seems reasonable to expect the distribution of  values of $P(x)$ as $x$ varies in $\mathbb{F}_q$ to be spread out of $\mathbb{F}_q$. In fact these values belong to a set $V$ with probability about $|V|/q.$
\medskip

One important way of measuring the uniformity of distribution is through the character sums:
$$\Big|\sum_{x\in\mathbb{F}_q}\psi (P(x))\Big|.$$

There are several Theorems showing that this sum is small. In our case, we will use Mordell's bound which work for abitrary polynomial with degree $\leq d.$

\begin{theorem}{(Mordell's Bound)}\newline
\label{th:mordell}
Let $\psi$ be a non trivial additive character of $\mathbb{F}_q$ and let $P(X)$ be a nonzero polynomial of degree $d<char(\mathbb{F}_q).$ Then

\begin{equation}
\label{eq:th}
\Big|\sum_{x\in\mathbb{F}_q}\psi(P(x))\Big|\leq O\Big(d\cdot q^{1-\frac{1}{2d}}\Big).
\end{equation}
\end{theorem}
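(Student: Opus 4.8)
The statement is a weakening of the Weil--Deligne bound $\abs{\sum_{x\in\mathbb{F}_q}\psi(P(x))}\le (d-1)q^{1/2}$, and since $q^{1/2}\le q^{1-1/(2d)}$ for every $d\ge 1$ it would follow at once from Weil; but in the spirit of this ``elementary'' section I would instead reprove it directly by Mordell's moment method, which uses no algebraic geometry. Write $S=\sum_{x\in\mathbb{F}_q}\psi(P(x))$ with $P(X)=a_dX^d+\cdots+a_0$ and $a_d\neq 0$. The plan is to embed $S$ into the family
$$S_{\mathbf b}=\sum_{x\in\mathbb{F}_q}\psi\bigl(a_dx^d+b_{d-1}x^{d-1}+\cdots+b_1x+a_0\bigr),\qquad \mathbf b=(b_1,\dots,b_{d-1})\in\mathbb{F}_q^{d-1},$$
in which only the intermediate coefficients are allowed to vary, and then to play the $2k$-th moment of this family against a counting problem for power sums.

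First I would expand the even moment and sum over $\mathbf b$. Using the orthogonality relation $\sum_{b\in\mathbb{F}_q}\psi(bt)=q$ if $t=0$ and $0$ otherwise (the $\mathbb{F}_q$-analogue of Lemma \ref{lem:size}, valid since $\psi$ is nontrivial), the sum $\sum_{\mathbf b\in\mathbb{F}_q^{d-1}}\abs{S_{\mathbf b}}^{2k}$ collapses, up to a factor $q^{d-1}$, to a weighted count of the tuples $(\mathbf x,\mathbf y)\in\mathbb{F}_q^{2k}$ whose power sums agree,
$$\sum_{i=1}^k x_i^{\,j}=\sum_{i=1}^k y_i^{\,j}\qquad (j=1,\dots,d-1);$$
bounding the residual phase by $1$ gives $\sum_{\mathbf b}\abs{S_{\mathbf b}}^{2k}\le q^{d-1}M_k$, where $M_k$ counts such tuples. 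Next I would recover the fixed sum $S$ from this average by translation invariance: since $\sum_x\psi(P(x+e))=S$ for every $e\in\mathbb{F}_q$, and the coefficient of $x^{d-1}$ in $P(x+e)$ equals $a_{d-1}+d\,a_d e$, the hypothesis $d<\mathrm{char}(\mathbb{F}_q)$ makes $e\mapsto a_{d-1}+d\,a_de$ injective, so $e$ produces $q$ distinct vectors $\mathbf b$ all with $\abs{S_{\mathbf b}}=\abs S$. Hence $q\,\abs S^{2k}\le\sum_{\mathbf b}\abs{S_{\mathbf b}}^{2k}\le q^{d-1}M_k$, i.e. $\abs S^{2k}\le q^{d-2}M_k$.

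It remains to bound $M_k$, and this is the crux. Taking $k=d-1$, Newton's identities --- invertible precisely because $\mathrm{char}(\mathbb{F}_q)>d-1$ --- show that the first $d-1$ power sums of a $(d-1)$-tuple determine its elementary symmetric functions, hence its underlying multiset $\{x_1,\dots,x_{d-1}\}$; thus $\mathbf y$ must be a permutation of $\mathbf x$, giving $M_{d-1}\le (d-1)!\,q^{d-1}$. Combining, $\abs S^{2(d-1)}\le q^{d-2}M_{d-1}\le (d-1)!\,q^{2d-3}$, and extracting $2(d-1)$-th roots yields $\abs S\le\bigl((d-1)!\bigr)^{1/(2(d-1))}q^{1-1/(2(d-1))}$. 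Since $\bigl((d-1)!\bigr)^{1/(2(d-1))}=\BigO{d}$ by Stirling and $q^{1-1/(2(d-1))}\le q^{1-1/(2d)}$, the stated estimate follows (the trivial linear case $d=1$ being handled directly by Lemma \ref{lem:size}). The main obstacle is exactly this mean-value step: converting the moment into the power-sum count and proving the symmetric-function uniqueness $M_{d-1}\le (d-1)!\,q^{d-1}$, together with the careful twofold use of the hypothesis $d<\mathrm{char}(\mathbb{F}_q)$ --- once to guarantee the translation moves $a_{d-1}$, and once to keep Newton's identities invertible.
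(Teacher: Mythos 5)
Your proof is correct, and in fact there is nothing in the paper to compare it against line by line: the paper states Theorem \ref{th:mordell} without proof, quoting it from the lecture notes \cite{sums}, so your argument supplies the self-contained elementary proof that the paper only cites. Each step checks out. The orthogonality collapse of $\sum_{\mathbf b}\abs{S_{\mathbf b}}^{2k}$ to at most $q^{d-1}M_k$ is right (the residual phase $\psi\bigl(a_d(\sum_i x_i^d-\sum_i y_i^d)\bigr)$ is correctly bounded by $1$); the translation orbit has size exactly $q$ because $e\mapsto a_{d-1}+d\,a_de$ is injective when $d<\mathrm{char}(\mathbb{F}_q)$ --- note only that $P(x+e)$ has constant term $P(e)$ rather than $a_0$, which is harmless since changing the constant term multiplies the sum by a unimodular factor, so $\abs{S_{\mathbf b(e)}}=\abs{S}$ still holds within your family with $a_0$ fixed; and the count $M_{d-1}\le (d-1)!\,q^{d-1}$ is valid because Newton's identities require dividing only by $1,\dots,d-1$, all invertible since $\mathrm{char}(\mathbb{F}_q)>d$. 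Your two uses of the hypothesis $d<\mathrm{char}(\mathbb{F}_q)$ are placed exactly where they are needed, and the degenerate case $d=1$ is correctly dispatched by orthogonality. It is worth noting that your choice $k=d-1$, varying only the $d-1$ intermediate coefficients, actually yields $\abs{S}\le ((d-1)!)^{1/(2(d-1))}\,q^{1-1/(2(d-1))}$, which is slightly \emph{sharper} than the stated bound both in the exponent (since $1-1/(2(d-1))\le 1-1/(2d)$) and in the constant ($((d-1)!)^{1/(2(d-1))}=O(\sqrt{d})$ rather than $O(d)$); the exponent $1-1/(2d)$ in the statement corresponds to the variant that varies all $d$ lower coefficients with $k=d$, and your version implies it at once.
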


\section{Extractors over Jacobian of Hyperelliptic}
\label{sec:3}

In this section, we propose an impoved version of the extractors proposed by Farashahi in \cite{F} on Jacobian of hyperelliptic curve of genus $2$ with odd characteristic. In our case, instead of working directly with points on the Jacobian $J(\mathbb{F}_q)$, we use there Mumford's representation. Therefore, our source become a subset of the polynomial ring $\mathbb{F}_q[X]$ where $q=p^n$, with $p$ an odd prime and $n\geq 1.$ Our approach uses character sums with polynomial arguments. 
\medskip

Let $J(\mathbb{F}_q)$ be the Jacobian of the hyperelliptic curve $\mathcal{H}$.We recall that each reduced divisor $D$ on $\mathcal{H}$ over $\mathbb{F}_q$ can be uniquely represented by a pair of polynomials $[u(x), v(x)], u, v \in \mathbb{F}_q[x].$ $D$ can also be uniquely represent by at most $2$ points on $\mathcal{H}$. Then, there is a map 

$$\begin{array}{l c l}
h:J(\mathbb{F}_q) & \longrightarrow & \mathbb{F}_q[x]^2\\
P+Q-2P_\infty & \longmapsto&  [x^2+u_1x+u_0,v_1x+v_0],\\
P-P_\infty & \longmapsto & [x+u_0,v_0],\\
\mathcal{O} & \longmapsto & [1,0].
\end{array}$$
Therefore, we define the $Sum$ and $Prod$ extractors as the restriction of $SEJ$ and $PEJ$ to the first component of the image of $h.$
 
\subsection{Sum  and Product Extractors for Jacobian over $\mathbb{F}_{p^n}$}.
\label{sec:3:1}
\medskip

We consider the function $f_k$ defined as follow:
$$
\begin{array}{l l c l}
f_k: & \mathbb{F}_q & \longrightarrow & \mathbb{F}_p^k \\
& x & \longmapsto & (x_1,x_2,\ldots, x_k)
\end{array}$$
where $x=(x_1,x_2,\ldots,x_n)$ with $x_i\in\mathbb{F}_p.$ 

\begin{definition}{Sum Extractor}\newline
The $Sum$ extractor for the Jacobian $J(\mathbb{F}_q)$ of $\mathcal{H}$ over $\mathbb{F}_q$ is defined as the function $Sum:\mathbb{F}_q[X]_{\leq 2}\rightarrow \mathbb{F}_p^{k}$ by

$$Sum(D)=\left\{
\begin{array}{c l l}
    f_k(-u_1) & \hbox{if} & D=[x^2+u_1x+u_0,v_1x+v_0],\\
    f_k(-u_0) & \hbox{if} & D=[x^2+u_0,v_0],\\
    0 & \hbox{if} & D=[1,0].
\end{array}
\right.$$
\end{definition}

\begin{definition}{Product Extractor}\newline
The product extractor $Prod$ for the Jacobian $J(\mathbb{F}_q)$ of $\mathcal{H}$ over $\mathbb{F}_q$ is defined as the function $Prod:\mathbb{F}_q[X]_{\leq 2}\rightarrow \mathbb{F}_p^{k}$ by

$$Prod(D)=\left\{
\begin{array}{c l l}
   f_k( u_0) & \hbox{if} & D=[x^2+u_1x+u_0,v_1x+v_0],\\
    f_k(-u_0) & \hbox{if} & D=[x^2+u_0,v_0],\\
    0 & \hbox{if} & D=[1,0].
\end{array}
\right.$$
\end{definition}

Let $A$ and $B$ be $\mathbb{F}_q$-valued random variables that are defined as $$A:=Sum(D),~~~ B:=Prod(D),$$ where $D\in J(\mathbb{F}_q).$ In the next Theorem, we show that provided the divisor $D$ is chosen uniformly in $J(\mathbb{F}_q),$ the element extracted from the divisor $D$ by $Sum$ or $Prod$ is indistinguishable from a uniformly random bit-string $\mathbb{F}_p^{k},$ with $k<n.$
\medskip

\begin{theorem}
\label{th:1}
Let $U_{\mathbb{F}_q}$ be a random variable uniformily distributed in $\mathbb{F}_q.$ Then
\begin{enumerate}
\item $\Delta(A,U_{\mathbb{F}_q})=O\Big(\frac{\sqrt{p^k}}{2\sqrt{q}(q+1)}\Big),$
\item $\Delta(B,U_{\mathbb{F}_q})=O\Big(\frac{\sqrt{p^k}}{2\sqrt{q}(q+1)}\Big).$
\end{enumerate}
\end{theorem}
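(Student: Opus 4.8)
The plan is to control $\Delta$ through the collision probability, using the bound \eqref{eq:relation1} in its sharp (squared) form $\Delta\le\tfrac12\sqrt{|\mathcal{X}|\,Col(A)-1}$, where the range of $Sum$ is $\mathcal{X}=\mathbb{F}_p^{k}$, so $|\mathcal{X}|=p^{k}$. Write $\psi_a(x)=e_p(\Tr(ax))$ for the additive characters of $\mathbb{F}_q$ and set $S(a)=\sum_{D\in J(\mathbb{F}_q)}\psi_a(-u_1(D))$, the character sum attached to the extracted coordinate $-u_1$. Since $f_k(x)=f_k(y)$ precisely when $x-y$ lies in the additive subgroup $W=\ker f_k$ of size $p^{n-k}$, expanding $Col(A)=\#J(\mathbb{F}_q)^{-2}\,\#\{(D,D'):f_k(-u_1(D))=f_k(-u_1(D'))\}$ by the orthogonality relations (Lemma \ref{lem:size}) gives
\[
Col(A)=\frac{1}{q\,\#J(\mathbb{F}_q)^{2}}\sum_{a\in\mathbb{F}_q}\Big(\sum_{w\in W}\psi_a(w)\Big)\,|S(a)|^{2},
\]
in which the term $a=0$ contributes the main value $|W|/q=1/p^{k}$, i.e. exactly the collision probability of the uniform law on $\mathbb{F}_p^{k}$.

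Next I would bound the remaining terms uniformly. Pulling $\max_{a\neq0}|S(a)|^{2}$ out of the sum and applying Winterhof's bound (Theorem \ref{th:winterhof}) to the subgroup $W$, namely $\sum_{a\in\mathbb{F}_q}\big|\sum_{w\in W}\psi_a(w)\big|\le q$, collapses the entire $a\neq0$ contribution and yields
\[
p^{k}\,Col(A)-1\le\frac{p^{k}}{\#J(\mathbb{F}_q)^{2}}\,\max_{a\neq0}|S(a)|^{2},\qquad\text{so}\qquad \Delta\le\frac{\sqrt{p^{k}}}{2\,\#J(\mathbb{F}_q)}\,\max_{a\neq0}|S(a)|.
\]
The improvement over a naive term-by-term estimate lives here: Winterhof's bound replaces the factor $p^{k}$ that would otherwise appear inside the character sum by $q$. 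Combined with the size $\#J(\mathbb{F}_q)\sim q^{2}$ (the lower bound furnishing the factor $q(q+1)$), this already produces the target denominator $\sqrt q\,(q+1)$ \emph{provided} one can show $\max_{a\neq0}|S(a)|=\BigO{\sqrt q}$.

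The heart of the argument is therefore the estimate $\max_{a\neq0}|S(a)|=\BigO{\sqrt q}$, which I would attack by summing over the Mumford first component rather than over points. A weight-$2$ reduced divisor has $u(x)=x^{2}-\sigma x+\pi$ with $-u_1=\sigma$, so $\psi_a(-u_1)=\psi_a(\sigma)$, and the number of admissible $v$ equals the number of $\mathbb{F}_q$-rational square roots of $f$ at the roots of $u$. Expressing this count through the quadratic character $\chi$ of $\mathbb{F}_q$ and the resultant $\Res(u,f)=f(x_1)f(x_2)$, a symmetric and hence $\mathbb{F}_q$-rational polynomial $F(\sigma,\pi)$, one finds the weight to be $(1+\chi(f(x_1)))(1+\chi(f(x_2)))$ for split $u$ and $1+\chi(F(\sigma,\pi))$ for inert $u$, the two formulas sharing their $\sigma$-independent part and the cross term $\chi(F(\sigma,\pi))$. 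Summing against $\psi_a$ and invoking $\sum_{\sigma}\psi_a(\sigma)=0$ for $a\neq0$ (Lemma \ref{lem:size}), every term constant in $\sigma$—in particular the dominant count $\sum_\pi 1=q$ and the single-root contributions $\sum_x\chi(f(x))$—cancels, along with the weight-$\le1$ and trivial divisors. What survives is a mixed character sum which, after separating the quadratic-character factor, reduces to additive character sums $\sum_x\psi_b(P(x))$ with a polynomial argument of degree at most $2<\operatorname{char}\mathbb{F}_q=p$, to which Mordell's bound (Theorem \ref{th:mordell}) applies and delivers square-root cancellation.

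Assembling these pieces gives statement (1); statement (2) is identical with the constant term $\pi=u_0$ (the product of the abscissas) replacing $\sigma$, so that $Prod$ is handled by the same reduction with $\psi_a(u_0)$ in place of $\psi_a(-u_1)$. I expect the genuine obstacle to be the evaluation of $S(a)$ in the previous paragraph: one must place the inert divisors (conjugate point pairs over $\mathbb{F}_{q^2}$) on the same footing as the split ones, control the ramified/repeated-root locus and the few weight-$\le1$ divisors without spoiling the cancellation of the $\sigma$-independent terms, and—most delicately—verify that the residual additive argument is a nonzero polynomial of degree $\le 2$ so that Mordell's bound is both applicable and sharp enough, a point in tension with $\deg f=2g+1=5$ that the symmetric-function reduction must resolve. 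The normalisation $\#J(\mathbb{F}_q)\sim q^{2}$ and the bookkeeping linking $f_k$, the subgroup $W$ and its orthogonal are routine by comparison.
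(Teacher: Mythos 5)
Your reduction of $\Delta$ to $\max_{a\neq 0}|S(a)|$ is sound: the identification of the kernel $W=\ker f_k$ with $|W|=p^{n-k}$, the orthogonality expansion of $Col(A)$, the use of Theorem \ref{th:winterhof} to collapse $\sum_{a}\bigl|\sum_{w\in W}\psi_a(w)\bigr|$ to $q$, and the resulting inequality $\Delta\le\frac{\sqrt{p^k}}{2\,\#J(\mathbb{F}_q)}\max_{a\neq0}|S(a)|$ are all correct bookkeeping. But the step you yourself flag as the heart of the argument --- $\max_{a\neq0}|S(a)|=\BigO{\sqrt q}$ --- fails, and with it the claimed bound. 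Carry your own expansion to the end: after the $\sigma$-independent terms cancel against $\sum_\sigma\psi_a(\sigma)=0$, the surviving split-locus contribution is
\[
\sum_{x_1,x_2\in\mathbb{F}_q}\chi\bigl(f(x_1)\bigr)\chi\bigl(f(x_2)\bigr)\psi_a(x_1+x_2)=\Bigl(\sum_{x\in\mathbb{F}_q}\chi\bigl(f(x)\bigr)\psi_a(x)\Bigr)^{2},
\]
the square of a Weil sum. Weil's bound gives $\bigl|\sum_{x}\chi(f(x))\psi_a(x)\bigr|\le \deg f\cdot\sqrt q$, and this order is attained generically (the sum is a sum of at most $\deg f$ Frobenius eigenvalues, each of modulus $\sqrt q$), so this term has true size of order $q$; the inert locus contributes the analogous sum over $\mathbb{F}_{q^2}$, again of order $\sqrt{q^2}=q$. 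Thus the best available --- and in general the true --- estimate is $\max_{a\neq0}|S(a)|$ of order $q$, i.e.\ square-root cancellation on the two-dimensional variety $J(\mathbb{F}_q)$, which feeds back to give $\Delta=\BigO{\sqrt{p^k}/q}$: exactly Farashahi's bound, a factor $\sqrt q$ short of the theorem. Mordell's bound (Theorem \ref{th:mordell}) cannot rescue this: it applies to purely additive sums $\sum_x\psi(P(x))$ and cannot absorb the multiplicative factor $\chi$ (no symmetric-function manipulation converts $\chi(\Res(u,f))$ into an additive character of a low-degree polynomial), and even where it does apply it is weaker than Weil, giving only $\BigO{q^{3/4}}$ at $d=2$.

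You should also know that the obstacle you hit is not one the paper overcomes: the paper's own proof never confronts $S(a)$ at all. It silently changes the source, taking the first Mumford coordinate $u$ to be \emph{uniformly distributed on} $G=\mathbb{F}_q[X]_{\leq 2}$ itself (with $|G|=q^2+q$), computes the collision probability for that uniform law, and bounds its character sum over $G$ by asserting it ``is approximately'' $\sum_{x}\psi(P(x))$ for a polynomial of degree $1$, to which it applies Mordell to get $K\le c_q\sqrt q$. Under the uniform-on-$G$ law the relevant sums factor through the coefficients, and the nontrivial-character terms essentially vanish, so that bound on $K$ is vacuously generous; the theorem's extra $\sqrt q$ of savings over \cite{F} comes entirely from this substitution of source, not from any estimate over the Jacobian. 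The fibre weights $(1+\chi(f(x_1)))(1+\chi(f(x_2)))$ that you correctly insisted on are precisely the difference between the uniform-on-$G$ law and the pushforward of the uniform law on $J(\mathbb{F}_q)$, and their correlation with additive characters is the order-$q$ quantity above. In short: your attempt is the honest proof of the statement as written (uniform $D\in J(\mathbb{F}_q)$) and it cannot reach the stated bound; the paper's argument establishes a different and much weaker claim about a uniformly random monic polynomial of degree $\le 2$, so the gap you identified is a gap in the theorem itself, not merely in your proof.
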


\begin{proof}
Let $\Psi$ be the set of all additive characters over $\mathbb{F}_q$. We put $f:=Sum$ and $G=\mathbb{F}_q[X]_{\leq 2}$. We consider the following sets.
\begin{align*}
M &= \{x_{k+1}\alpha_{k+1}+x_{k+2}\alpha_{k+2}+\cdots+x_n\alpha_n,x_i\in\mathbb{F}_p\}\subset \mathbb{F}_p^{n}.\\
\mathbb{A} &=\{u_1(x),u_2(x))\in G^2,\exists m\in M: f(u_1(x))-f(u_2(x))=m\}.
\end{align*}
$M$ is an additive subgroup of $\mathbb{F}_q$ of order $k$. Thus $|M|=p^k$ withe $k\geq 1$. Using Lemma \ref{lem:size}, we construct the following characteristic function for $\mathbb{A}$  $$\hbox{\textbf{1}}_{\mathbb{A}}=\frac{1}{p^n}\sum_{\psi\in \Psi}\psi(f(u_1(x))-f(u_2(x)-m)$$
wich is equal to $1$ if $f(u_1(x)-f(u_2(x))=m$ and $0$ otherwise. Therefore, we have that

$$|\mathbb{A}|=\frac{1}{p^n}\sum_{u_1(x)\in G}\sum_{u_2(x)\in G}\sum_{m\in M}\sum_{\psi\in \Psi}\psi(f(u_1(x))-f(u_2(x)-m).$$
Then $$Col(A)=\frac{1}{|G|^2}|\mathbb{A}|.$$ 
It's well known that the number of unitary polynomials of degree equl to $d$ in a polynomial field $\mathbb{F}_q[X]$ is $q^d$.  Thus we have that $|G|=q^2+q.$ Thus, $|G|^2= q^4+2q^3+q^2.$ Then, we have that

\begin{eqnarray}
\label{eqn:1}
Col(A) &=&\frac{1}{|G|^2 p^n}\sum_{u_1(x)\in G}\sum_{u_2(x)\in G}\sum_{m\in M}\sum_{\psi\in \Psi}\psi(f(u_1(x))-f(u_2(x)-m)\nonumber\\
&=&\frac{1}{|G|^2 p^k}p^{n-k}|G|^2 + \frac{1}{|G|^2 p^n}\sum_{u_1(x)\in G}\sum_{u_2(x)\in G}\sum_{m\in M}\sum_{\psi\neq \psi_0}\psi(f(u_1(x))-f(u_2(x)-m)\nonumber\\
&=& \frac{1}{p^k} + \frac{1}{|G|^2 p^n}\sum_{u_1(x)\in G}\sum_{u_2(x)\in G}\sum_{m\in M}\sum_{\psi\neq \psi_0}\psi(f(u_1(x))-f(u_2(x)-m)\nonumber\\
&=&\frac{1}{p^k} + \frac{1}{|G|^2 p^n}\sum_{\psi\neq \psi_0}\Big(\sum_{u_1(x)\in G}\psi(f(u_1(x))\Big) \Big(\sum_{u_2(x)\in G}\psi(-f(u_2(x)\Big) \Big(\sum_{m\in M}\psi(-m)\Big)\nonumber\\
&\leq & \frac{1}{p^k} + \frac{K^2}{|G|^2 p^n}\sum_{\psi\neq \psi_0}\Big(\sum_{m\in M}\psi(-m)\Big)
\end{eqnarray} 

where $K=\max_{\psi}\Big(\Big|\sum_{u_1(x)\in G}\psi(f(u_1(x))\Big|,\Big|\sum_{u_2(x)\in G}\psi(-f(u_2(x))\Big|\Big).$  

One note that  the sum 
$$\Big|\sum_{u_1(x)\in G}\psi(f(u_1(x))\Big|\simeq \Big|\sum_{x\in \mathbb{F}_q}\psi(P(x))\Big|$$
for a fix polynomial $P(x)\in \mathbb{F}_q[x]$. In fact, if $u_1(x)$ is of degree $2$, then $|f(u_1(x))|$ is approximatively equal to $|u^{'}_1(0)|$ and if $u_1(x)$ is of degree $1$ then $|f(u_1(x))|\simeq |u_1(0)|$. Thus, we can assume $P(x)$ to be of degree $d=1$.  Therefore, Theorem \ref{th:mordell} gives that 

\begin{equation}
\label{eq:2}
K\leq c_q\sqrt{q}
\end{equation}
where $c_q$ is the constant involved in inequality \ref{eq:th}. Therefore, combining inequality \eqref{eq:2} and Theorem \ref{th:winterhof}, inequality \ref{eqn:1} becomes

$$Col(A)\leq \frac{1}{p^k} + \frac{c_q^2q}{q^4+2q^3+q^2}=\frac{q^4+2q^3+q^2+c_q^2p^kq}{p^k(q^4+2q^3+q^2)}.$$

From Lemma \ref{lem:size}, we have that

$$\frac{1+4\Delta^2(A,U_{\mathbb{F}_q}}{p^k}\leq Col(A)\leq \frac{q^4+2q^3+q^2+c_q^2p^kq}{p^k(q^4+2q^3+q^2)}.$$
Therefore,

$\Delta(A,U_{\mathbb{F}_q})\leq \frac{c_q}{2\sqrt{p^{n-k}}(p^n+1)}=\frac{c_q\sqrt{p^k}}{2\sqrt{q}(q+1)}$, thus $\Delta(A,U_{\mathbb{F}_q})=O\Big(\frac{\sqrt{p^k}}{2\sqrt{q}(q+1)}\Big).$ This finishes the proof of $(1)$.

The proof of $(2)$ can be done in a similar way, thus we omit the details.

\end{proof}

\begin{cor}The functions $Sum$ and $Prod$ are deterministic $\Big(\mathbb{F}_p^k,O\Big(\frac{\sqrt{p^k}}{2\sqrt{q}(q+1)}\Big)\Big)-$extractor for $J(\mathbb{F}_q)$.
\end{cor}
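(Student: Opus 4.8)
The plan is to deduce the Corollary formally from Theorem~\ref{th:1} by unwinding the definitions of $\delta$-uniformity and of a deterministic $(\mathcal{Y},\delta)$-extractor; no new character-sum estimate is required. Throughout set $\delta = O\!\left(\frac{\sqrt{p^k}}{2\sqrt{q}(q+1)}\right)$ and $\mathcal{Y}=\mathbb{F}_p^k$, and recall that $\Delta$ denotes the statistical distance $SD$.

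First I would make the source precise. By the uniqueness clause in the definition of a reduced divisor, the Mumford map $h$ is injective, so a uniform divisor $D\in J(\mathbb{F}_q)$ is carried to a uniform element of the image $h(J(\mathbb{F}_q))\subseteq G$, where $G=\mathbb{F}_q[X]_{\le 2}$. Under this identification the random variables $A=Sum(D)$ and $B=Prod(D)$ are exactly the variables whose distance to uniform is controlled in Theorem~\ref{th:1}. Invoking parts (1) and (2) of that theorem gives $SD(A,U_{\mathbb{F}_p^k})\le\delta$ and $SD(B,U_{\mathbb{F}_p^k})\le\delta$; by the definition of $\delta$-uniformity these outputs are $\delta$-uniform on $\mathbb{F}_p^k$, and by the definition of a deterministic extractor this is precisely the statement that $Sum$ and $Prod$ are deterministic $\left(\mathbb{F}_p^k,\delta\right)$-extractors for $J(\mathbb{F}_q)$.

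The one point that genuinely requires care --- the obstacle behind an otherwise formal argument --- is the discrepancy between the two candidate sources. Theorem~\ref{th:1} computes $Col(A)$ by letting $u_1,u_2$ range over the whole polynomial space $G$, with $|G|=q^2+q$, whereas the Corollary asserts an extractor property for $D$ uniform over $J(\mathbb{F}_q)$, whose cardinality is $q^2+O(q^{3/2})$ and whose Mumford image is the proper subset of pairs $[u,v]$ with $u\mid(v^2-f)$. A fully rigorous proof must therefore either (i) phrase the guarantee for the uniform polynomial source $G$ directly, treating $h$ as the bridge from $J(\mathbb{F}_q)$, or (ii) bound $SD\!\left(h_*(U_{J(\mathbb{F}_q)}),U_G\right)$ and absorb it into $\delta$ via the triangle inequality for statistical distance. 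I would adopt reading (i), after which the Corollary is an immediate transcription of Theorem~\ref{th:1} together with the two definitions, once it is noted that the comparison variable is $U_{\mathbb{F}_p^k}$ (of support size $p^k$) rather than $U_{\mathbb{F}_q}$.
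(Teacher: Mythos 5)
Your proposal is correct and takes essentially the same route as the paper, whose entire proof of the corollary is a one-line appeal to Theorem~\ref{th:1}; your unwinding of the definitions (including the correction that the comparison variable must be $U_{\mathbb{F}_p^k}$ rather than $U_{\mathbb{F}_q}$) is a more careful transcription of that same step, and the source discrepancy you flag is genuinely present in the paper, since the collision-probability computation in Theorem~\ref{th:1} ranges over all of $G=\mathbb{F}_q[X]_{\leq 2}$ rather than over $h(J(\mathbb{F}_q))$. One caveat: your fallback option (ii) would not actually work, because $SD\big(h_*(U_{J(\mathbb{F}_q)}),U_G\big)=1-|J(\mathbb{F}_q)|/|G|=O(q^{-1/2})$ by Hasse--Weil, which dominates $\delta=O\big(\sqrt{p^k}/(2\sqrt{q}(q+1))\big)$ and so cannot be absorbed by the triangle inequality; reading (i), which you adopt, is the only viable interpretation, and under it your argument is sound.
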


\begin{proof}
The result of Theorem \ref{th:1} gives the proof of this corollary.
\end{proof}

\subsection{Sum  and Product Extractors for Jacobian over $\mathbb{F}_p$}
\label{sec:3}
Here we defined the $Sum$ and $Prod$ extractors as before on $\mathbb{F}_p$ where $p$ is a prime number $\geq 3$. We recall that if $I$ is an interval of integers, it's well known that 
$$\sum_{x\in \mathbb{F}_p}\Big|\sum_{\sigma \in I}e_p(x\sigma)\Big|\leq p\log_2(p).$$

\begin{definition}
We defined the extractors 
$S_k:\mathbb{F}_p[X]_{\leq 2}\rightarrow \{0,1\}^k $  by

$$S_k(D)=\left\{
\begin{array}{c l l}
    lsk_k(-u_1) & \hbox{if} & D=[x^2+u_1x+u_0,v_1x+v_0],\\
    lsb_k(-u_0) & \hbox{if} & D=[x^2+u_0,v_0],\\
    0 & \hbox{if} & D=[1,0].
\end{array}
\right. $$ and $P_k:\mathbb{F}_p[X]_{\leq 2}\rightarrow \{0,1\}^k $  by

$$P_k(D)=\left\{
\begin{array}{c l l}
    lsk_k(u_0) & \hbox{if} & D=[x^2+u_1x+u_0,v_1x+v_0],\\
    lsb_k(-u_0) & \hbox{if} & D=[x^2+u_0,v_0],\\
    0 & \hbox{if} & D=[1,0].
\end{array}
\right.$$

\end{definition}

The following Lemmas states that $S_k$ and $P_k$ are deterministic extractors for the Jacobian of the hyperelliptic curve.

\begin{lemma}
\label{lemma:1}
Let $G:=\mathbb{F}_p[X]_{\leq 2}$ and $U_G$ a random variable uniformly distributed in $G$ and $k$ a positive integer.Then 

$$\Delta(A,U_k)\ll\sqrt{\frac{2^{k}}{p}}\(1+\frac{\sqrt{\log_2(p)}}{p+1}\)$$
where $U_k$ is the uniform distribution on $\{0,1\}^k.$
\end{lemma}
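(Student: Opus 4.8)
The plan is to replay the argument of Theorem \ref{th:1} in structure, with two substitutions forced by the fact that we now extract least significant bits over the prime field: the additive subgroup $M\subset\mathbb{F}_q$ is replaced by the set of field elements producing an $lsb_k$-collision, and Winterhof's bound (Theorem \ref{th:winterhof}) is replaced by the interval character-sum bound $\sum_{x\in\mathbb{F}_p}\bigl|\sum_{\sigma\in I}e_p(x\sigma)\bigr|\leq p\log_2(p)$ recalled just above the statement. First I would set $f:=S_k$ and $G:=\mathbb{F}_p[X]_{\leq 2}$, so that $|G|=p^2+p$ exactly as in the proof of Theorem \ref{th:1}, while the output alphabet is now $\{0,1\}^k$ of size $2^k$. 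Writing $A=S_k(U_G)$, the goal is an upper bound on $\Delta(A,U_k)=SD(A,U_k)$.

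Next I would express the collision probability. Two divisors produce the same output precisely when the associated field elements $-u_1$ (resp.\ $-u_0$) agree in their $k$ least significant bits, i.e.\ when, lifted to $\{0,\dots,p-1\}$, their difference is a multiple of $2^k$. Collecting the admissible differences into the set $M=2^{k}I \bmod p$, where $I$ is an interval of integers of length about $p/2^{k}$ (here the wrap-around modulo $p$ must be handled with care), I would build the characteristic function of the collision event from the characters $e_p$ via Lemma \ref{lem:size}, exactly as the indicator $\mathbf{1}_{\mathbb{A}}$ is built in Theorem \ref{th:1}. This yields
$$Col(A)=\frac{1}{|G|^2\,p}\sum_{u_1\in G}\sum_{u_2\in G}\sum_{m\in M}\sum_{a\in\mathbb{F}_p}e_p\bigl(a(f(u_1)-f(u_2)-m)\bigr).$$
Isolating $a=0$ gives the main term $|M|/p$, which I would record in the form $|M|/p\leq 2^{-k}+O(1/p)$; the $O(1/p)$ defect, coming from the floor in $|I|$ and the wrap-around, is precisely what will generate the leading ``$1$'' in the stated estimate.

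For the terms $a\neq0$ I would factor the quadruple sum, as in \eqref{eqn:1}, into
$$\Bigl(\sum_{u_1\in G}e_p(a f(u_1))\Bigr)\Bigl(\sum_{u_2\in G}e_p(-a f(u_2))\Bigr)\Bigl(\sum_{m\in M}e_p(-am)\Bigr),$$
bound each of the first two factors by $K\leq c_p\sqrt{p}$ through Mordell's bound (Theorem \ref{th:mordell}), after the same reduction to a degree-one polynomial argument used in Theorem \ref{th:1}, and then control $\sum_{a\neq0}\bigl|\sum_{m\in M}e_p(-am)\bigr|$ by the substitution $b=2^k a$, a bijection of $\mathbb{F}_p^{*}$ since $\gcd(2^k,p)=1$, which turns it into $\sum_{b}\bigl|\sum_{\sigma\in I}e_p(-b\sigma)\bigr|\leq p\log_2(p)$. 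Assembling the pieces gives
$$Col(A)\leq \frac{1}{2^k}+O\!\left(\frac1p\right)+\frac{c_p^2\,\log_2(p)}{p\,(p+1)^2}.$$

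Finally, I would feed this into the collision/statistical-distance inequality \eqref{eq:relation1}, namely $Col(A)\geq(1+4\Delta^2)/2^k$, to obtain $4\Delta^2\leq 2^k\bigl(O(1/p)+c_p^2\log_2(p)/(p(p+1)^2)\bigr)=\tfrac{2^k}{p}\bigl(O(1)+c_p^2\log_2(p)/(p+1)^2\bigr)$, and conclude via $\sqrt{a+b}\leq\sqrt a+\sqrt b$ that $\Delta\ll\sqrt{2^k/p}\,\bigl(1+\sqrt{\log_2(p)}/(p+1)\bigr)$, as claimed. I expect the genuine obstacle to be the bookkeeping around $M$: justifying that the $lsb_k$-collision set is, up to the wrap-around modulo $p$, a dilate $2^k I$ of an honest interval so that the interval bound applies cleanly, and tracking the $O(1/p)$ defect of $|M|/p$ precisely enough that it produces the additive ``$1$'' rather than being absorbed into constants. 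The reduction of the polynomial sum over $G$ to a degree-one argument, inherited from Theorem \ref{th:1}, is the other soft point and would need to be stated with the same care.
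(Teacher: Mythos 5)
Your proposal is correct and follows essentially the same route as the paper's own proof: both compute $Col(A)$ via additive characters over the collision set (your $M=2^kI\bmod p$ is exactly the paper's parametrization $S_k(u_1)-S_k(u_2)\equiv\delta\sigma\pmod p$ with $\delta=2^k$, $\sigma\leq\sigma_0=msb_{n-k}(p-1)$), bound the two polynomial sums by $K\leq\sqrt p$ via the same degree-one reduction and Mordell's bound, control the remaining sum by the interval bound $p\log_2(p)$ (your substitution $b=2^ka$ just makes explicit what the paper does implicitly), and convert to $\Delta(A,U_k)$ via the collision inequality \eqref{eq:relation1}. Your treatment of the main term $|M|/p\leq 2^{-k}+O(1/p)$ and the wrap-around bookkeeping is in fact slightly more careful than the paper's, which simply writes $(\sigma_0+1)/p$.
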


\begin{proof}

Let $\delta=2^k$, $\sigma_0:=msb_{n-k}(p-1)$ and $A=S_k(D).$ We consider the set

$\mathbb{A} =\{u_1(x),u_2(x))\in G^2,\exists \sigma\leq \sigma_0, S_k(u_1(x))-S_k(u_2(x))-\delta\sigma\equiv 0\pmod{p}\}.$ Then $$Col(A)=\frac{1}{|G|^2}|\mathbb{A}|.$$ 

\begin{eqnarray}
\label{eqn:1}
Col(A) &=&\frac{1}{|G|^2 p}\sum_{u_1(x)\in G}\sum_{u_2(x)\in G}\sum_{\sigma\leq \sigma_0}\sum_{\psi\in \Psi}\psi(S_k(u_1(x))-S_k(u_2(x)-\delta\sigma)\nonumber\\
&=&\frac{\sigma_0+1}{p} + \frac{1}{|G|^2 p}\sum_{u_1(x)\in G}\sum_{u_2(x)\in G}\sum_{\sigma\leq \sigma_0}\sum_{\psi\neq \psi_0}\psi(S_k(u_1(x))-S_k(u_2(x)-\delta\sigma)\nonumber\\
&=& \frac{\sigma_0+1}{p} + \frac{1}{|G|^2 p}\sum_{u_1(x)\in G}\sum_{u_2(x)\in G}\sum_{\sigma\leq \sigma_0}\sum_{\psi\neq \psi_0}\psi(S_k(u_1(x))-S_k(u_2(x)-\delta\sigma)\nonumber\\
&=&\frac{\sigma_0+1}{p} + \frac{1}{|G|^2 p}\sum_{\psi\neq \psi_0}\Big(\sum_{u_1(x)\in G}\psi(S_k(u_1(x))\Big) \Big(\sum_{u_2(x)\in G}\psi(-S_k(u_2(x)\Big) \Big(\sum_{\sigma\leq \sigma_0}\psi(-\delta\sigma)\Big)\nonumber\\
&\leq & \frac{\sigma_0+1}{p} + \frac{K^2}{|G|^2 p}\sum_{\psi\neq \psi_0}\Big(\sum_{\sigma\leq \sigma_0}\psi(-\delta\sigma)\Big)\\
&\leq & \frac{\sigma_0+1}{p} + \frac{p\log_2(p)}{|G|^2}
\end{eqnarray}

where $K=\max_{\psi}\Big(\Big|\sum_{u_1(x)\in G}\psi(S_k(u_1(x))\Big|,\Big|\sum_{u_2(x)\in G}\psi(-S_k(u_2(x))\Big|\Big)\leq \sqrt{p}.$ 
\medskip

Therefore,
$$\Delta(A,U_k)\ll\sqrt{\frac{2^{k}}{p}}\(1+\frac{\sqrt{\log_2(p)}}{p+1}\).$$ 
\end{proof}

\begin{lemma}
\label{lemma:2}
Let $G:=\mathbb{F}_p[X]_{\leq 2}$ and $U_G$ a random variable uniformly distributed in $G$ and $k$ a positive integer.Then 

$$\Delta(A,U_k)\ll\sqrt{\frac{2^{k}}{p}}\(1+\frac{\sqrt{\log_2(p)}}{p+1}\)$$
where $U_k$ is the uniform distribution on $\{0,1\}^k.$
\end{lemma}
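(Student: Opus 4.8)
The plan is to repeat the argument of Lemma~\ref{lemma:1} almost verbatim, now with the product extractor $P_k$ in place of the sum extractor $S_k$ and $A:=P_k(D)$. The only structural fact about the extractor that the proof of Lemma~\ref{lemma:1} actually uses is that, on a divisor represented by a Mumford pair $[u(x),v(x)]$, the output is $lsb_k$ of a single coefficient of the monic polynomial $u(x)$; for $P_k$ this coefficient is the constant term $u_0$ (equivalently, by Mumford's representation, the product of the abscissas of the points in the support of $D$) instead of $-u_1$. Since $u_0$ is again an $\mathbb{F}_p$-valued affine coordinate on $G=\mathbb{F}_p[X]_{\le 2}$, the collision analysis carries over unchanged.

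Concretely, I would first set $\delta=2^k$, $\sigma_0:=msb_{n-k}(p-1)$, and introduce the collision set
$$\mathbb{A}=\{(u_1(x),u_2(x))\in G^2:\ \exists\,\sigma\le\sigma_0,\ P_k(u_1(x))-P_k(u_2(x))-\delta\sigma\equiv 0\pmod p\},$$
so that $Col(A)=|\mathbb{A}|/|G|^2$. Using Lemma~\ref{lem:size} to build the characteristic function of $\mathbb{A}$ from the additive characters $\psi\in\Psi$, I would expand
$$Col(A)=\frac{1}{|G|^2 p}\sum_{u_1\in G}\sum_{u_2\in G}\sum_{\sigma\le\sigma_0}\sum_{\psi\in\Psi}\psi\bigl(P_k(u_1)-P_k(u_2)-\delta\sigma\bigr),$$
isolate the term $\psi=\psi_0$ as the main contribution $(\sigma_0+1)/p$, and factor the remaining triple sum over nontrivial characters into a product of three one-dimensional character sums.

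Next I would bound the inner sums $\big|\sum_{u\in G}\psi(P_k(u))\big|$. As $u(x)=x^2+u_1x+u_0$ ranges over the monic polynomials of degree $\le 2$, its constant term $u_0$ ranges over $\mathbb{F}_p$ with each value occurring the same number of times, so $\sum_{u\in G}\psi(P_k(u))$ collapses to a character sum over $\mathbb{F}_p$ with a linear argument. Mordell's bound (Theorem~\ref{th:mordell}) with $d=1$ then gives $K:=\max_\psi\big|\sum_{u\in G}\psi(P_k(u))\big|\le\sqrt{p}$. Combining this with the summation bound $\sum_{x\in\mathbb{F}_p}\big|\sum_{\sigma\le\sigma_0}e_p(x\sigma)\big|\le p\log_2(p)$ recalled before the definition yields
$$Col(A)\le\frac{\sigma_0+1}{p}+\frac{p\log_2(p)}{|G|^2}.$$
Finally I would feed this into the relation \eqref{eq:relation1} between $SD$ and $Col$ over a set of size $2^k$, use $|G|=p^2+p$ together with the estimate $\sigma_0+1\approx p/2^k$ for the number of admissible values of $\sigma$, and simplify $\sqrt{2^k\,Col(A)-1}$ to reach the stated bound for $\Delta(A,U_k)$.

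The computation is routine once the extractor's output is recognized as a coordinate function; the only point deserving care is the reduction of $\sum_{u\in G}\psi(P_k(u))$ to a genuine $\mathbb{F}_p$ character sum, so that Mordell's bound applies and delivers the saving of $\sqrt{p}$. Because nothing in the chain of estimates distinguishes the constant coefficient $u_0$ from the linear coefficient $-u_1$ used for $S_k$, the resulting bound is identical to that of Lemma~\ref{lemma:1}, which is precisely why the two statements coincide.
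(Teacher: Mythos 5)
Your proposal is correct and takes essentially the same route as the paper, whose own proof of this lemma is just the one-line remark that the argument of Lemma~\ref{lemma:1} carries over to $P_k$; your verbatim rerun of the collision/character-sum computation with the constant coefficient $u_0$ in place of $-u_1$ is exactly that intended adaptation. The one point you make explicit that the paper leaves implicit --- that $u_0$ is equidistributed over $\mathbb{F}_p$ as $u$ ranges over $G$, so the inner sum reduces to a genuine $\mathbb{F}_p$ character sum and Mordell's bound with $d=1$ applies --- is the right justification for why nothing in the estimate chain changes.
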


\begin{proof}
The proof for the extractor $P_k$ is similar to the prood of Lemma \eqref{lemma:1}.
\end{proof}

\medskip

\section{Comparison}.
\medskip

We mainly compare our result with the result of R. R. Farashahi (see \cite{F}.) In fact, Farashahi obtained a $O(\mathbb{F}_q,\frac{1}{\sqrt{q}})$-deterministic extractor by computing directly the statistical distance. His method of proof was more complicated and involved bounds of cardinalities of some curves. 

In our approach, instead of computing directly the statistical distance, we compute the collision  probability then use the inequality \eqref{eq:relation1} to obtain a sharper estimate of the statistical distance. One sees that the upper bounds obtained in Theorem \ref{th:1} are smaller than the bounds on $SEJ$ in Proposition $1$ and $PEJ$ in Corollary $2$ in \cite{F}. 

Moreover, the output of the extractor $SEJ$ in \cite{F} is a coefficient $-u_1 \in \mathbb{F}_{p^k}$ of a polynomial of degree $2$. Or, an element in $\mathbb{F}_{p^k}$ is not necessarely a uniform random-bit string. But, in our case, we extract the $k$ least significant bits of the coefficient $-u_1$ using the function $f_k$ as defined in section \ref{sec:3:1}. So, our approach gives more advantages for further applications in cryptography.

Furthermore, we have defined the extractors $Sum$ and $Prod$ on $\mathbb{F}_p$, where $p$ is an odd prime. Our results obtained in Lemma \ref{lemma:1} and Lemma \ref{lemma:2} are, in our knowledge, new results in this subject.

\medskip

\section*{Acknowledgments} This work was carried out by a financial support from the goverment of Canada's International Developpement Research Centre(IDRC) and within the framework of the AIMS research for Africa project. The author thanks Dr. Abdoul A. Ciss for useful comments and suggestions on an earlier draft of this paper.


\begin{thebibliography}{99}


\bibitem{Ca} Cantor, D. "Computing in the Jacobian of a Hyperelliptic Curve" {\it Mathematics of
Computation} {\bf 48(177)}, (1989) 95-101.



\bibitem{Che} C. Chevalier, P.-A. Fouque, D. Pointcheval and S. Zimmer, ‘Optimal
randomness extraction from a Diffie-Hellman element’, {\it Proc Eurocrypt
2009, Lect. Notes in Comp. Sci., Springer-Verlag, Berlin,} {\bf 5479} (2009),
572-589.

\bibitem{Ciss1}A. A. Ciss and D. Sow, ‘Randomness extraction in elliptic curves and
secret key derivation at the end of Diffie Hellman protocol’, {\it Intern. J.
Appl. Cryptography}, {\bf 2} (2012), 360-365.

\bibitem{Dvir} Z. Dvir, "Extractors for varieties", {\it Comput. Complex.}, {\bf 21} (2012), 515-
572.

\bibitem{F} R. R. Farashahi (2007) "Extractors for Jacobian of Hyperelliptic curves of genus 2 in odd Characteristic. {\it In S.D Galbraith (ED) Proceeding of the $11$-th IMA International conference on Cryptography and Coding, 18-20 December 2007, Cirencester, United Kingdom. (pp 313-335). (Lecture notes in computer Science; Vol 4887) Berlin, Germany: Springer.\/} DOI: $ 10.1007/978-3-540-77272-9-19.$

\bibitem{F2} R. R. Farashahi and I. E. Shparlinski, ‘Pseudorandom bits from points
on elliptic curves’, {\it IEEE Trans. Inform. Theory} {\bf 58} (2012), 1242-
1247.

\bibitem{F3} R. R. Farashahi, P.-A. Fouque, I. E. Shparlinski, M. Tibouchi and
J. F. Voloch, ‘Indifferentiable deterministic hashing to elliptic and
hyperelliptic curves’, {\it Math. Comp.}, {\bf 82} (2013), 491-512.

\bibitem{GP} Gaudry, P. "An Algorithm for Solving the Discrete Log Problem on Hyperelliptic
Curves". {\it In: Preneel, B. (ed.) EUROCRYPT 2000. LNCS, } {\bf  1807}, Springer, Heidelberg (2000), 3419-3448.


\bibitem{Ko}  Koblitz, N., ``Hyperelliptic Cryptosystem.", {\it J. of Cryptology\/}, {\bf 1} (1989), 139-150.

\bibitem{Mum} Mumford, D. "Tata Lectures on Theta II"{\it  In: Progress in Mathematics}, {\bf 43}
(1984).

\bibitem{sums} Swastik K. "Elementary bounds on character sums with polynomial arguments", Topics in Finite Fields (Fall 2013), Rutgers University. Last Modified: Thursday, $10$-th October, $2013.$

\bibitem{Win} A. Winterhof , “Incomplete Additive Character Sums and Applications”, {\it In Finite fields and applications}. Springer Berlin Heidelberg, (2001) p. 462-474.

\bibitem{Zim} S. Zimmer, “M\'ecanismes cryptographiques pour la g\'en\'eration de clefs et l'authentification”, (2008). Th\'ese de doctorat. \'ecole normale sup\'erieure.



\end{thebibliography}
\end{document}